\documentclass[journal]{IEEEtran}
\usepackage{amsmath,amssymb,physics,tabularx,graphicx,multirow,hhline,algorithm,mathtools,algpseudocode,varwidth,color,soul,amsthm,cite}

\newtheorem{theorem}{Theorem}

\ifCLASSINFOpdf
\else
\fi

\ifCLASSOPTIONcompsoc \usepackage[caption=false,font=normalsize,labelfont=sf,textfont=sf]{subfig}
\else
\usepackage[caption=false,font=footnotesize]{subfig}
\fi


\begin{document}
\algnewcommand{\algorithmicgoto}{\textbf{go to}}%
\algnewcommand{\Goto}[1]{\algorithmicgoto~\ref{#1}}%
\title{A Synchrophasor Data-driven Method\\ for Forced Oscillation Localization \\under Resonance Conditions}

\author{Tong Huang,~\IEEEmembership{Student Member,~IEEE,}
		Nikolaos M. Freris,~\IEEEmembership{Senior Member,~IEEE,}
		P. R. Kumar,~\IEEEmembership{Fellow,~IEEE,}\\
        and Le Xie,~\IEEEmembership{Senior Member,~IEEE}



        }



\maketitle

\begin{abstract}
This paper proposes a data-driven algorithm of locating the source of forced oscillations and suggests the physical interpretation of the method. By leveraging the sparsity of the forced oscillation sources along with the low-rank nature of synchrophasor data, the problem of source localization under resonance conditions is cast as computing the sparse and low-rank components using Robust Principal Component Analysis (RPCA), which can be efficiently solved by the exact Augmented Lagrange Multiplier method. Based on this problem formulation, an efficient and practically implementable algorithm is proposed to pinpoint the forced oscillation source during real-time operation. Furthermore, we provide theoretical insights into the efficacy of the proposed approach by use of physical model-based analysis, in specific by establishing the fact that the rank of the resonance component matrix is at most 2. The effectiveness of the proposed method is validated in the IEEE 68-bus power system and the WECC 179-bus benchmark system.
\end{abstract}

\begin{IEEEkeywords}
Forced Oscillations (FO), Phasor Measurement Unit (PMU), Resonant Systems, Robust Principal Component Analysis (RPCA), Unsupervised Learning, Big Data.
\end{IEEEkeywords}

\IEEEpeerreviewmaketitle{}

\section{Introduction} 
\label{sec:introduction}
\IEEEPARstart{P}{hasor} measurement units (PMUs) enhance the transparency of bulk power systems by streaming the fast-sampled and synchronized measurements to system control centers. Such finely-sampled and time-stamped PMU measurements can reveal several aspects of the rich dynamical behavior of the grid, which are invisible to conventional supervisory control and data acquisition (SCADA) systems. 
Among the system dynamical behaviors exposed by PMUs, \emph{forced oscillations} (FOs) have attracted significant attention within the power community. Forced oscillations are driven by periodically exogenous disturbances that are typically injected by malfunctioning power apparatuses such as wind turbines, steam extractor valves of generators, or poorly-tunned control systems \cite{DBLP:journals/corr/GhorbaniparvarZ16,maslennikov2017dissipating,7401141}. Cyclic loads such as cement mills and steel plants, constitute another category of oscillation sources \cite{DBLP:journals/corr/GhorbaniparvarZ16}. The impact of such injected periodical perturbation propagates through transmission lines and results in forced oscillations throughout the grid; some real-world events of forced oscillations since 1966 are reported in \cite{DBLP:journals/corr/GhorbaniparvarZ16}.

The presence of forced oscillations compromises the security and reliability of power systems. For example, forced oscillations may trigger protection relays to trip transmission lines or generators, potentially causing uncontrollable cascading failures and unexpected load shedding \cite{iso_pres}. Moreover, sustained forced oscillations reduce device lifespans by introducing undesirable vibrations and additional wear and tear on power system components; consequently, failure rates and maintenance costs of compromised power apparatuses might increase \cite{iso_pres}. Therefore, timely suppression of forced oscillations is instrumental to system operators.

One effective way of suppressing a forced oscillation is to locate the oscillation's source, a canonical problem that we call \emph{forced oscillation localization}, and to disconnect it from the power grid. A natural attempt to conduct forced oscillation localization could be tracking the largest oscillation over the power grid, under the assumption that measurements near the oscillatory source are expected to exhibit the most severe oscillations, based on engineering intuition. However, counter-intuitive cases may occur when the frequency of the periodical perturbation lies in the vicinity of one of the natural modes of the power system, whence a \emph{resonance phenomenon} is triggered \cite{7041242}. In such cases, PMU measurements exhibiting the most severe oscillations may be geographically far from where the periodical perturbation is injected, posing a significant challenge to system operators in pinpointing the forced oscillation source. It is worth noting that such counter-intuitive cases are more than a mere theoretical concern: one example occurred at the Western Electricity Coordinating Council (WECC) system on Nov. 29, 2005, when a 20-MW forced oscillation initiated by a generation plant at Alberta incurred a tenfold larger oscillation at the California-Oregon Inter-tie line that is 1100 miles away from Alberta \cite{7401141}. Such a severe oscillation amplification significantly compromises the security and reliability of the power grid. Hence, it is imperative to develop a forced oscillation localization method that is effective even in the challenging but highly hazardous cases of resonance\cite{8362302}.

In order to pinpoint the source of forced oscillations, several localization techniques have been developed. 
In \cite{maslennikov2017dissipating}, the authors leverage the oscillation energy flows in power networks to locate the source of sustained oscillations. In this energy-based method, the energy flows can be computed using the preprocessed PMU data, and the power system components generating the oscillation energy are identified as the oscillation sources. In spite of the promising performance of the energy-based method \cite{maslennikov2017dissipating}, the rather stringent assumptions pertaining to knowledge of load characteristics and the grid topology may restrict its usefulness to specific scenarios \cite{8362302}, \cite{bin2017location}.
In \cite{8355784}, the oscillation source is located by comparing the measured current spectrum of system components with one predicted by the effective admittance matrix.
However, the construction of the effective admittance matrix requires accurate knowledge of system parameters that may be unavailable in practice. 
In \cite{8519314}, generator parameters are learned from measurements based on prior knowledge of generator model structures, and, subsequently, the admittance matrix is constructed and used for FO localization. Nevertheless, model structures of generators might not be known beforehand, owing to the unpredictable switching states of power system stabilizers \cite{WECC-standard}. Thus, it is highly desirable to design a FO localization method that does not heavily depend upon availability of the first-principle model and topology information of the power grid.

In this paper, we propose a \emph{purely data-driven} yet \emph{physically interpretable} approach to pinpoint the source of forced oscillations in the challenging resonance case. By leveraging the sparsity of the FO sources and the low-rank nature of high-dimensional synchrophasor data, the problem of forced oscillation localization is formulated as computing the sparse and low-rank components of the measurement matrix using Robust Principal Component Analysis (RPCA) \cite{candes2011robust}. Based on this problem formulation, a algorithm for real-time operation is designed to pinpoint the source of forced oscillations. The main merits of the proposed approach include: 1) it does not require any information on dynamical system model parameters or topology, thus fostering an efficient and easily deployable practical implementation; 2) it can locate the source of forced oscillations with high accuracy, even when resonance phenomena occur; and 3) its efficacy can be interpreted by physical model-based analysis.

The rest of this paper is organized as follows: Section \ref{sec:localization_of_forced_oscillation and its challeges} elaborates on the forced oscillation localization problem and its main challenges; 
in Section \ref{sec:problem_formulation}, the FO localization is formulated as a matrix decomposition problem and a FO localization algorithm is designed; 
Section \ref{sec:theoretical_justification_of_the_proposed_methodology} provides theoretical justification of the efficacy of the algorithm;
Section \ref{sec:case_study} validates the effectiveness of the proposed method in both the IEEE 68-bus power system and the WECC 179-bus power system; Section \ref{sec:conclusion} summarizes the paper and poses future research questions.

\section{Localization of Forced Oscillations and Challenges} 
\label{sec:localization_of_forced_oscillation and its challeges}
\subsection{Mathematical Interpretation} 
\label{sub:mathematical_interpretation_of_forced_oscillation}
	The dynamic behavior of a power system in the vicinity of its operation condition can be represented by a continuous linear time-invariant (LTI) state-space model:
	\begin{subequations}\label{eq:state_space}
		\begin{align}
			\mathbf{\dot{x}}(t)&=A\mathbf{x}(t)+B\mathbf{u}(t),\\ 
			\mathbf{y}(t)&=C\mathbf{x}(t),
		\end{align}
\end{subequations}
where state vector $\mathbf{x} \in \mathbb{R}^{n}$, input vector $\mathbf{u} \in \mathbb{R}^{r}$, and output vector $\mathbf{y} \in \mathbb{R}^{m}$ collect the \emph{deviations} of: state variables, generator/load control setpoints, and measurements, from their respective steady-state values. Accordingly, matrices $A\in \mathbb{R}^{n\times n}$, $B \in \mathbb{R}^{n \times r}$ and $C \in \mathbb{R}^{m\times n}$ are termed as the state matrix, the input matrix and the output matrix, respectively. Denote by $\mathcal{L} = \{\lambda_1, \lambda_2, \ldots, \lambda_n\}$ the set of all eigenvalues of the state matrix $A$. The power system \eqref{eq:state_space} is assumed to be stable, with all eigenvalues $\lambda_i \in \mathbb{C}$ being distinct, i.e., $\Re{\lambda_i}<0$ for all $i\in \{1,2,\hdots,n\}$ and $\lambda_i \ne \lambda_j$ for all $i \ne j$.

We proceed to rigorously define a forced oscillation source as well as source measurements.
Suppose that the $l$-$th$ input $u_l(t)$ in the input vector $\mathbf{u}(t)$ varies periodically due to malfunctioning components (generators/loads) in the grid. In such a case, $u_l(t)$ can be decomposed into $F$ frequency components, viz.,
\begin{equation}
	u_{l}(t)=\sum_{j = 1}^{F}{P_j\sin({\omega_jt + \theta_j})},
	\label{eq:input_sig}
\end{equation}
where $\omega_j\ne 0$, $P_j\ne 0$ and $\theta_j$ are the frequency, amplitude and phase displacement of the $j$-$th$ frequency component of the $l$-$th$ input, respectively.
As a consequence, the periodical input will result in sustained oscillations present in the measurement vector $\mathbf{y}$. 
The generator/load associated with input $l$ is termed as the \emph{forced oscillation source}, and the measurements at the bus directly connecting to the forced oscillation source are termed as \emph{source measurements}.

In particular, suppose that the frequency $\omega_d$ of an injection component is close to the frequency of a poorly-damped mode, i.e., there exists $j^*\in\{1, 2, \hdots,n\}$,
\begin{equation}
	\omega_d \approx \Im{\lambda_{j^*}}, \quad {\Re{\lambda_{j^*}}}/{\abs{\lambda_{j^*}}}\approx 0.
	\label{eq: resonance_condition}
\end{equation}
In such a case, oscillations with growing amplitude (i.e., resonance) are observed \cite{7041242}. Hence, \eqref{eq: resonance_condition} is adopted as the \emph{resonance condition} in this paper.

In a power system with PMUs, the measurement vector $\mathbf{y}(t)$ is sampled at a frequency of $f_{\text{s}}$ (samples per second). Within a time interval from the FO starting point up to time instant $t$, the time evolution of the measurement vector $\mathbf{y}(t)$ can be discretized by sampling and represented by a matrix called a \emph{measurement matrix} $Y_t = [y_{p,q}^t]$, which we formally define next. Denote by zero the time instant when the forced oscillations start. The following column concatenation defines the measurement matrix $Y_t$ up to time $t$:
\begin{equation}
	Y_t:=
	\begin{bmatrix}
		\mathbf{y}(0),& \mathbf{y}(1/f_{\text{s}}),& \hdots & \mathbf{y}(\left \lfloor{tf_{\text{s}}}\right \rfloor/f_{\text{s}})
	\end{bmatrix},
	\label{eq:measurement_Y}
\end{equation}
where $\left \lfloor{\cdot}\right \rfloor$ denotes the floor operation. The $i$-$th$ column of the measurement matrix $Y_t$ in \eqref{eq:measurement_Y} suggests the ``snapshot'' of all synchrophasor measurements over system at the time $(i-1)/f_{\text{s}}$. The $k$-$th$ row of $Y_t$ denotes the time evolution of the $k$-$th$ measurement deviation in the output vector of the $k$-th PMU. Due to the fact that the output vector may contain multiple types of measurements (e.g., voltage magnitudes, frequencies, etc.), a normalization procedure is introduced as follows. Assume that there are $K$ measurement types. Denote by $Y_{t,i}=[y_{p,q}^{t,i}]$ the measurement matrix of measurement type $i$, where $i = \{1,2,\ldots,K\}$. The normalized measurement matrix $Y_{\text{n}t}=[y_{p,q}^{\text{n},t}]$ is defined by
\begin{equation}
	\label{eq:nomalized_measurement}
	Y_{\text{n}t} =
	\begin{bmatrix}
		\frac{Y_{t,1}^{\top}}{\norm{Y_{1t}}_{\infty}}, & \frac{Y_{t,2}^{\top}}{\norm{Y_{2t}}_{\infty}},&\ldots&\frac{Y_{t,K}^{\top}}{\norm{Y_{Kt}}_{\infty}}
	\end{bmatrix}^{\top}.
\end{equation}

The forced oscillation localization problem is equivalent to pinpointing the forced oscillation source using measurement matrix $Y_t$. Due to the complexity of power system dynamics, the precise power system model \eqref{eq:state_space} may not be available to system operators, especially in a real-time operation. Therefore, it is assumed that the only known information for forced oscillation localization is the measurement matrix $Y_t$. In brief, the first-principle model \eqref{eq:state_space} as well as the perturbation model \eqref{eq:input_sig} is introduced mainly for the purpose of defining FO localization problem and theoretically justifying the data-driven method proposed in Section \ref{sec:problem_formulation}.

	\subsection{Main Challenges of Pinpointing the Sources of Forced Oscillation} 
	\label{sub:challenge_of_pinpointing_the_forced_oscillation_sources}
	The topology of the power system represented by \eqref{eq:state_space} can be characterized by an undirected graph $G=(\mathcal{B}, \mathcal{T})$, where vertex set $\mathcal{B}$ comprises all buses in the power system, while edge set $\mathcal{T}$ collects all transmission lines. Suppose that the PMU measurements at bus $i^*\in \mathcal{B}$ are the source measurements. Then bus $j^*$ is said to be in the vicinity of the FO source, if bus $j^*$ is a member of the following vicinity set:
	\begin{equation}
		\mathcal{V} = \{j\in \mathcal{B}|d_G(i^*, j) \le N_0\},
		\label{eq: vincinity_set}
	\end{equation}
	where $d_G(i,j)$ denotes the $i$-$j$ distance, viz., the number of transmission lines (edges) in a shortest path connecting buses (vertices) $i$ and $j$; the threshold $N_0$ is a nonnegative integer. In particular, $\mathcal{V} = \{i^*\}$ for the source measurement at bus $i^*$, if $N_0$ is set to zero.

	Intuitively, it is tempting to assume that the source measurement can be localized by finding the maximal absolute element in the normalized measurement matrix $Y_{\text{n}t}$, i.e., expecting that the most severe oscillation should be manifested in the vicinity of the source. However, a major challenge for pinpointing the FO sources arises from the following (perhaps counter-intuitive) fact: the most severe oscillation does not necessarily manifest near the FO source, in the presence of \emph{resonance phenomena}. Following the same notation as in \eqref{eq:measurement_Y} and \eqref{eq: vincinity_set}, we term a normalized measurement matrix $Y_{\text{n}t}$ as \emph{counter-intuitive case}, if
	\begin{equation}
	\label{eq:counter_intuitive_1}
		p^* \notin \mathcal{V}, 
	\end{equation}
	where $p^*$ can be obtained by finding the row index of the maximal element in the measurement matrix $Y_t$, i.e.,
	\begin{equation}
	\label{eq:counter_intuitive_2}
		[p^*, q^*] = \text{arg}\max_{p,q} y_{p,q}^{\text{n},t}.
	\end{equation}
	It is such counter-intuitive cases that make pinpointing the FO source challenging \cite{7041242}.
	Figure \ref{fig:counter-intuitive} illustrates one such counter-intuitive case, where the source measurement (red) does not correspond to the most severe oscillation. Additional examples of counter-intuitive cases can be found in \cite{8362302}. Although the counter-intuitive cases are much less likely to happen than the intuitive ones (in terms of frequency of occurrence), it is still imperative to design an algorithm to pinpoint the FO source even in the counter-intuitive cases due to the hazardous consequences of the forced oscillations under resonance conditions.
\begin{figure}[h!]
	\centering
	\includegraphics[width = 1.8in, height = 1.2in]{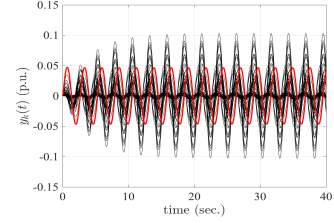}
	\caption{One counter-intuitive case\cite{8362302} from the IEEE 68-bus benchmark system \cite{207380}: the black curves correspond to the non-source measurements; the red curve corresponds to the source measurement.}
	\label{fig:counter-intuitive}
\end{figure}
\section{Problem Formulation and Proposed Methodology} 
\label{sec:problem_formulation}

In this section, we formulate the FO localization problem as a matrix decomposition problem. Besides, we present a FO localization algorithm for real-time operation.

\subsection{Problem Formulation} 
\label{sub:problem_formulation}
	Given a measurement matrix $Y_t$ up to time $t$ with one type of measurement (without loss in generality), the FO source localization is formulated as decomposing the measurement matrix $Y_t$ into a low-rank matrix $L_t$ and a sparse matrix $S_t$:
	\begin{subequations} \label{eq: formulation_1}
	    \begin{align}
	        Y_t = L_t &+ S_t,\\
	        \rank L_t &\le \alpha,\\
	        \norm{S_t}_0 &\le \beta,
	    \end{align}
	 \end{subequations}
	 where the pseudo-norm $\norm{\cdot}_0$ returns the number of non-zero elements of a matrix; the non-negative integer $\alpha$ is the upper bound of the rank of the low-rank matrix $L_t$, and the non-negative integer $\beta$ is the upper bound on the number of non-zero entries in the sparse matrix $S_t$. Given non-negative integers $\alpha$ and $\beta$, it is possible to numerically find $\{L_t,S_t\}$ via \emph{alternating projections} \cite{8362302}. The source measurement index $i^*$ can be tracked by finding the largest absolute value in the sparse matrix $S_t$, viz.,
	 \begin{equation}
	 	 [p^*, q^*]^{\top} = \text{arg}\max_{p,q}{\abs{s_{p,q}^{t}}}.
	 	 \label{eq:i_star}
	 \end{equation}

	 Due to the prior unavailability of the upper bounds $\alpha$ and $\beta$ \cite{8362302}, the matrix decomposition problem shown in \eqref{eq: formulation_1} is reformulated as a instance of \emph{Robust Principal Component Analysis (RPCA)} \cite{candes2011robust}:
	 \begin{equation}
			\min_{S_t} \quad \norm{Y_t-S_t}_{\star} + \xi \norm{S_t}_1,
			\label{eq:formulation_2}
	 \end{equation}
	 where $\norm{\cdot}_{\star}$ and $\norm{\cdot}_1$ denote the nuclear norm and $l_1$ norm, respectively; the tunable parameter $\xi$ regulates the extent of sparsity in $S_t$. The formulation in \eqref{eq:formulation_2} is a convex relaxation of \eqref{eq: formulation_1}. Under some assumptions, the sparse matrix $S_t$ and the low-rank matrix $L_t$ can be disentangled from the measurement matrix $Y_t$ \cite{candes2011robust} by diverse algorithms \cite{lin2010augmented}. The exact Lagrange Multiplier Method (ALM) is used for numerically solving the formulation \eqref{eq:formulation_2}. For a measurement matrix containing multiple measurement types, \eqref{eq:formulation_2} can be modified by replacing $Y_t$ with $Y_{\text{n}t}$.
	
\subsection{FO Localization Algorithm for Real-time Operation} 
\label{sub:real_time_fo_localization_procedure}
Next, we present an FO localization algorithm for real-time operation, using the formulation \eqref{eq:formulation_2}. The starting point of forced oscillations can be determined by the event detector and classifier reported in \cite{6808416,7312503}. Once the starting point of forced oscillations is detected, the forced oscillation source can be pinpointed by Algorithm \ref{alg:FO_Localization}, where $T_0$ and $\xi$ are user-defined parameters.

\begin{algorithm}
	\caption{Real-time FO Localization}\label{alg:FO_Localization}
	\begin{algorithmic}[1]
		\State Update $Y_{T_0}$ by \eqref{eq:measurement_Y};
		\State Obtain $Y_{\text{n}T_0}$ by \eqref{eq:nomalized_measurement};
		\State Find $S_t$ in \eqref{eq:formulation_2} via the exact ALM for chosen $\xi$;
		\State Obtain $p^*$ by \eqref{eq:i_star};
		\State \Return $p^*$ as \label{mark}the source measurement index.
	\end{algorithmic}
\end{algorithm}

\section{Theoretical Interpretation of the RPCA-based Algorithm} 
\label{sec:theoretical_justification_of_the_proposed_methodology}
	This section aims to develop a theoretical connection between the first-principle model in Section \ref{sec:localization_of_forced_oscillation and its challeges} and the data-driven approach presented in Section \ref{sec:problem_formulation}. We start such an investigation by deriving the time-domain solution to PMU measurements in a power system under resonance conditions. Then, the resonance component matrix for the power grid is obtained from the derived solution to PMU measurements. Finally, the efficacy of the proposed method is interpreted by examining the rank of the resonance component matrix.
	\subsection{PMU Measurement Decomposition} 
	\label{sub:pmu_measurement_decompostion}
		For the power system with $r$ inputs and $m$ PMU measurements modeled using \eqref{eq:state_space}, the $k$-th measurement and the $l$-th input can be related by
	\begin{subequations}\label{eq:state_space_SISO}
		\begin{align}
			\mathbf{\dot{x}}(t)&=A\mathbf{x}(t)+\mathbf{b}_lu_l(t)\\ 
			y_k(t)&=\mathbf{c}_k\mathbf{x}(t),
		\end{align}
\end{subequations}
where column vector $\mathbf{b}_l\in \mathbb{R}^n$ is the $l$-th column of matrix $B$ in \eqref{eq:state_space}, and row vector $\mathbf{c}_k \in \mathbb{R}^n$ is the $k$-th row of matrix $C$. Let $\mathbf{x} = M\mathbf{z}$, where $\mathbf{z}$ denotes the transformed state vector and matrix $M$ is chosen such that the similarity transformation of $A$ is diagonal, then
\begin{subequations}\label{eq:decoupled_state_space}
		\begin{align}
			\mathbf{\dot{z}}(t)&=\mathbf{\Lambda}\mathbf{z}(t)+M^{-1}\mathbf{b}_lu_l(t)\\ 
			y_k(t)&=\mathbf{c}_kM\mathbf{z}(t),
		\end{align}
\end{subequations}
where $\mathbf{\Lambda} = \text{diag}(\lambda_1, \lambda_2, \ldots, \lambda_n)=M^{-1} A M$. Denote by column vector $\mathbf{r}_i\in \mathbb{C}^{n}$ and row vector $\mathbf{l}_i\in \mathbb{C}^{n} $ the right and left eigenvectors associated with the eigenvalue $\lambda_i$, respectively.
Accordingly, the transformation matrices $M$ and $M^{-1}$ can be written as $[\mathbf{r}_1,\mathbf{r}_2, \ldots, \mathbf{r}_n]$ and $[\mathbf{l}_1^{\top},\mathbf{l}_2^{\top}, \ldots, \mathbf{l}_n^{\top}]^{\top}$, respectively. The transfer function in the Laplace domain from $l$-th input to $k$-th output is
\begin{equation}
	H(s) = \mathbf{c}_kM(sI- \Lambda)^{-1}M^{-1} \mathbf{b}_l
		= \sum_{i=1}^{n}\frac{\mathbf{c}_k\mathbf{r}_i\mathbf{l}_i \mathbf{b}_l}{s-\lambda_i}.
		\label{eq:transfer_function}
\end{equation}

For simplicity, assume that the periodical injection $u_l$ only contains one component with frequency $\omega_d$ and amplitude $P_d$, namely, $F=1$, $\omega_1 = \omega_d$ and $P_1 = P_d$ in \eqref{eq:input_sig}. Furthermore, we assume that, before $t=0^-$, the system is in steady state, viz., $\mathbf{x}(0^-) = \mathbf{0}$. Let sets $\mathcal{N}$ and $\mathcal{M}'$ collect the indexes of real eigenvalues and the indexes of complex eigenvalues with positive imaginary parts, respectively, viz.,
\begin{equation}
	\mathcal{N} = \{i\in \mathbb{Z}^+|\lambda_i \in \mathbb{R}\}; \quad \mathcal{M}' = \{i\in \mathbb{Z}^+|\Im(\lambda_i)>0\}.
\end{equation}

Then the Laplace transform for PMU measurement $y_k$ is
\begin{equation}
	\begin{aligned}
		&Y_k(s)=\left(\sum_{i=1}^{n}\frac{\mathbf{c}_k\mathbf{r}_i\mathbf{l}_i \mathbf{b}_l}{s-\lambda_i}\right)\frac{P_d \omega_d}{s^2+ \omega_d^2}\\
		&= \left[
			\sum_{i\in \mathcal{N}}\frac{\mathbf{c}_k\mathbf{r}_i\mathbf{l}_i \mathbf{b}_l}{s-\lambda_i}
			+
			\sum_{i\in \mathcal{M'}}\left(
			\frac{\mathbf{c}_k\mathbf{r}_i\mathbf{l}_i \mathbf{b}_l}{s-\lambda_i}
			+
			\frac{\mathbf{c}_k\bar{\mathbf{r}}_i\bar{\mathbf{l}}_i \mathbf{b}_l}{s-\bar{\lambda}_i}
			\right)
		\right]
		\frac{P_d \omega_d}{s^2+ \omega_d^2}
	\end{aligned}
\end{equation}
where $\bar{(\cdot)}$ denotes complex conjugation.

Next, we analyze the components resulting from the real eigenvalues and the components resulting from the complex eigenvalues, individually.
\subsubsection{Components resulting from real eigenvalues} 
	\label{ssub:components_resulting_from_real_eigenvalues}
	In the Laplace domain, the component resulting from a real eigenvalue $\lambda_i$ is
	\begin{equation}
		Y_{k,i}^{\text{D}}(s) = \frac{\mathbf{c}_k\mathbf{r}_i\mathbf{l}_i \mathbf{b}_l}{s-\lambda_i}\frac{P_d \omega_d}{s^2+ \omega_d^2}.
	\end{equation}
	The inverse Laplace transform of $Y_{k,i}^{\text{D}}$(s) is
	\begin{equation}
		 y_{k,i}^{\text{D}}(t) = \frac{\mathbf{c}_k\mathbf{r}_i\mathbf{l}_i \mathbf{b}_l P_d \omega_d}{\lambda_i^2 + \omega_d^2}e^{\lambda_it} + \frac{\mathbf{c}_k\mathbf{r}_i\mathbf{l}_i \mathbf{b}_l P_d}{\sqrt{\lambda_i^2 + \omega_d^2}}\sin(\omega_dt + \phi_{i,l})
	\end{equation}
	where $\phi_{i,l} = \angle\left( \sqrt{\lambda_i^2+ \omega_l^2} + j\lambda_i\right)$, and $\angle(\cdot)$ denotes the angle of a complex number.

\subsubsection{Components resulting from complex eigenvalues} 
\label{ssub:components_resulting_from_the_}
In the Laplace domain, the component resulting from a complex eigenvalue $\lambda_i = - \sigma_i + j \omega_i$ is
\begin{equation}
	Y^{\text{B}}_{k,i}(s)= \left(
			\frac{\mathbf{c}_k\mathbf{r}_i\mathbf{l}_i \mathbf{b}_l}{s-\lambda_i}
			+
			\frac{\mathbf{c}_k\bar{\mathbf{r}}_i\bar{\mathbf{l}}_i \mathbf{b}_l}{s-\bar{\lambda}_i}
			\right)\frac{P_d \omega_d}{s^2+ \omega_d^2}.
\end{equation}
The inverse Laplace transform of $Y^{\text{B}}_{k,i}$(s) is 
\begin{equation}
\begin{aligned}
	&y^{\text{B}}_{k,i}(t)=\\
	&\frac{2 P_d \omega_d \abs{\mathbf{c}_k \mathbf{r}_i \mathbf{l}_i \mathbf{b}_l}}{\sqrt{(\sigma_i^2 + \omega_d^2 - \omega_i^2)^2 + 4 \omega_i^2 \sigma_i^2}} e^{-\sigma_i t} \cos(\omega_i t + \theta_{k,i} -\psi_i)+\\
		&\frac{2 P_d \abs{\mathbf{c}_k \mathbf{r}_i \mathbf{l}_i \mathbf{b}_l}\sqrt{\omega_d^2 \cos^2\theta_{k,i} + ( \sigma_i \cos \theta_{k,i}- \omega_i \sin \theta_{k,i})^2}}{\sqrt{(\sigma_i^2 - \omega_d^2 + \omega_i^2)^2 + 4 \omega_d^2 \sigma_i^2}}\times\\
		&\cos(\omega_d t + \phi_i - \alpha_i),
		\label{eq:beat_time_domain}
\end{aligned}
	\end{equation}
	where $\theta_{k,i} = \angle (\mathbf{c}_k \mathbf{r}_i \mathbf{l}_i \mathbf{b}_l)$; $\psi_i = \angle \left( {\sigma_i^2 + \omega_d^2 - \omega_i^2} - j{2 \sigma_i \omega_i}\right)$; $\phi_i = \angle (\sigma_i^2 - \omega_d^2 + \omega_i^2 - j2 \omega_i \sigma_i)$, and $\alpha_i = \angle [{\omega_d \cos \theta_{k,i}} + j({\sigma_i \cos \theta_{k,i} - \omega_i \sin \theta_{k,i}})]$.

\subsubsection{Resonance component} 
\label{ssub:resonance_components}
Under the resonance condition defined in \eqref{eq: resonance_condition}, the injection frequency $\omega_d$ is in the vicinity of one natural modal frequency $\omega_{j^*}$, and the real part of the natural mode is small. We define a new set $\mathcal{M}\subset \mathcal{M}'$ as $\mathcal{M} = \{i\in \mathbb{Z}^+|\Im(\lambda_i)>0, |\omega_i- \omega_{j^*}|>\kappa\}$,
where $\kappa$ is a small and nonnegative real number. 

For $i \notin \mathcal{M}\cup \mathcal{N}$, the eigenvalue $\lambda_i = -\sigma_i + j \omega_i$ satisfies $
	\omega_i \approx \omega_{j^*} \approx \omega_d$ and $\sigma_i \approx 0$. Then $\psi_i \approx - \frac{\pi}{2}$, $\phi_i \approx- \frac{\pi}{2}$, and $\alpha_i \approx -\theta_{k,i}$.
Therefore, equation \eqref{eq:beat_time_domain} can be simplified as
	\begin{equation}
		y^{\text{B}}_{k,i}(t)\approx y^{\text{R}}_{k,i}(t)= \frac{P_d \abs{\mathbf{c}_k \mathbf{r}_i \mathbf{l}_i \mathbf{b}_l}}{\sigma_i}(1- e^{-\sigma_i t})\sin(\omega_d t + \theta_{k,i})
		\label{eq:resonance_time_domain}
	\end{equation}
	for $i \notin \mathcal{M}\cup \mathcal{N}$. In this paper, $y^{\text{R}}_{k,i}$ in \eqref{eq:resonance_time_domain} is termed as the \emph{resonance component} in the $k$-th measurement. 

	In summary, a PMU measurement $y_k(t)$ in a power system \eqref{eq:state_space} under resonance conditions can be decomposed into three classes of components, i.e., 
	\begin{equation}
		y_k(t) = \sum_{i \in \mathcal{N}} y_{k,i}^{\text{D}}(t) + \sum_{i \in \mathcal{M}}y^{\text{B}}_{k,i}(t) + \sum_{i\notin \mathcal{M}\cup \mathcal{N}} y^{\text{R}}_{k,i}(t).
		\label{eq: summary_sig_decomp}
	\end{equation}
\subsection{Observations on the resonance component and the resonance-free component} 
\label{sub:observations_on_the_resonance_component}
	\subsubsection{Severe oscillations arising from resonance component} 
	\label{ssub:severe_oscillations_arising_from_the_resonance_component}
	Figure \ref{fig:resonance_resonance_free}(a) visualizes the resonance component of a PMU measurement in the IEEE 68-bus benchmark system. As it can be observed from Figure \ref{fig:resonance_resonance_free}(a), the upper envelop of the oscillation increases concavely at the initial stage before reaching a steady-stage value (about $0.1$ in this case). The closed-form approximation for such a steady-state value is ${P_d \abs{\mathbf{c}_k \mathbf{r}_i \mathbf{l}_i \mathbf{b}_l}}/{\sigma_i}$. For a small positive $\sigma_{j^*}$ associated with eigenvalue $\lambda_{j^*}$, the steady-state amplitude of the resonance component may be the dominant one. If a PMU measurement far away from the source measurements is tightly coupled with the eigenvalue $\lambda_{j^*}$, it may manifest the most severe oscillation, thereby confusing system operators with regards to FO source localization. Therefore, the presence of resonance components may cause the counter-intuitive cases defined by \eqref{eq:counter_intuitive_1}, \eqref{eq:counter_intuitive_2}.

	\begin{figure}[h] 
				\centering
				\subfloat[]{\includegraphics[width=1.4in]{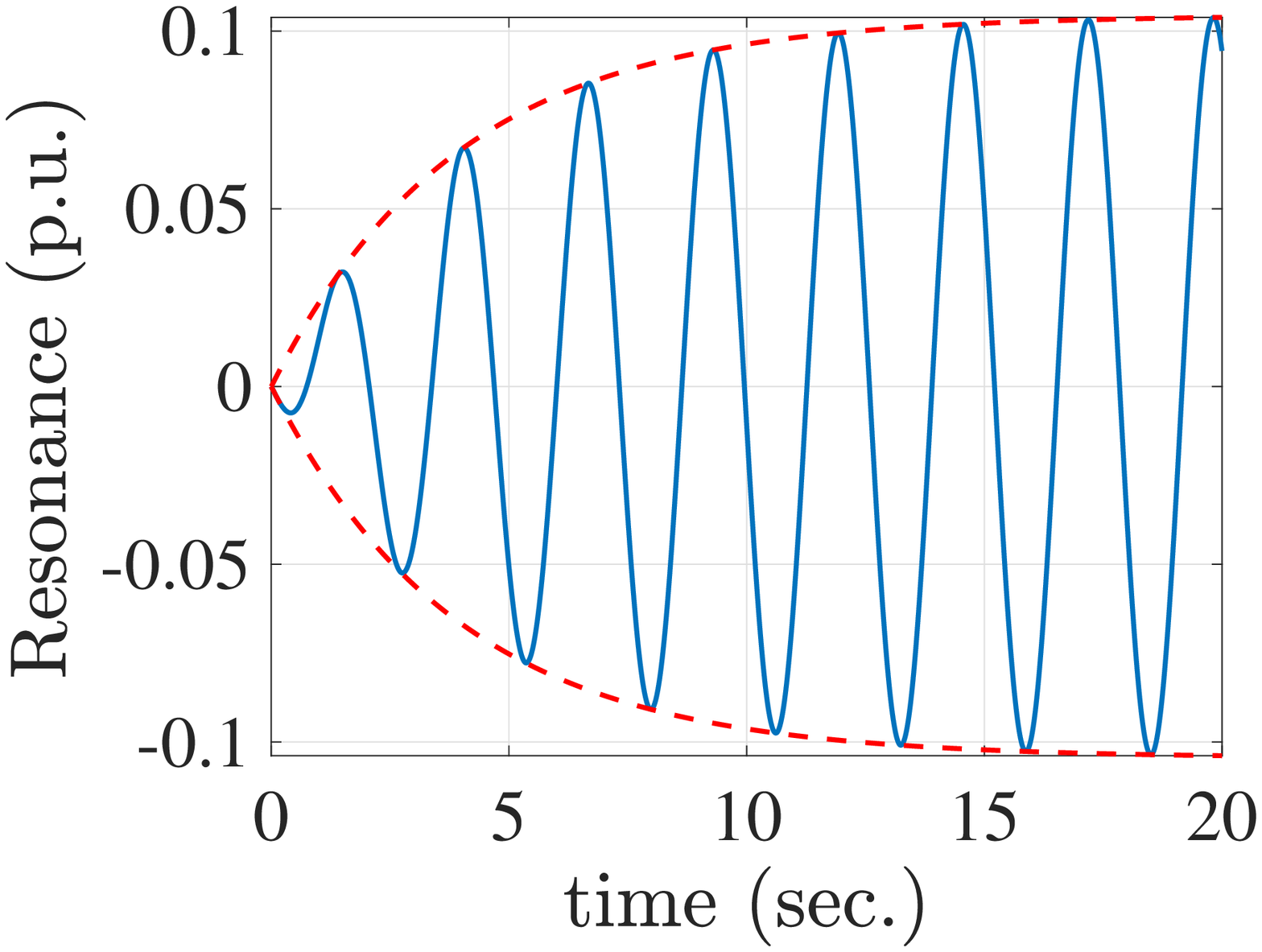}}
				\hfil
				\subfloat[]{\includegraphics[width=1.6in]{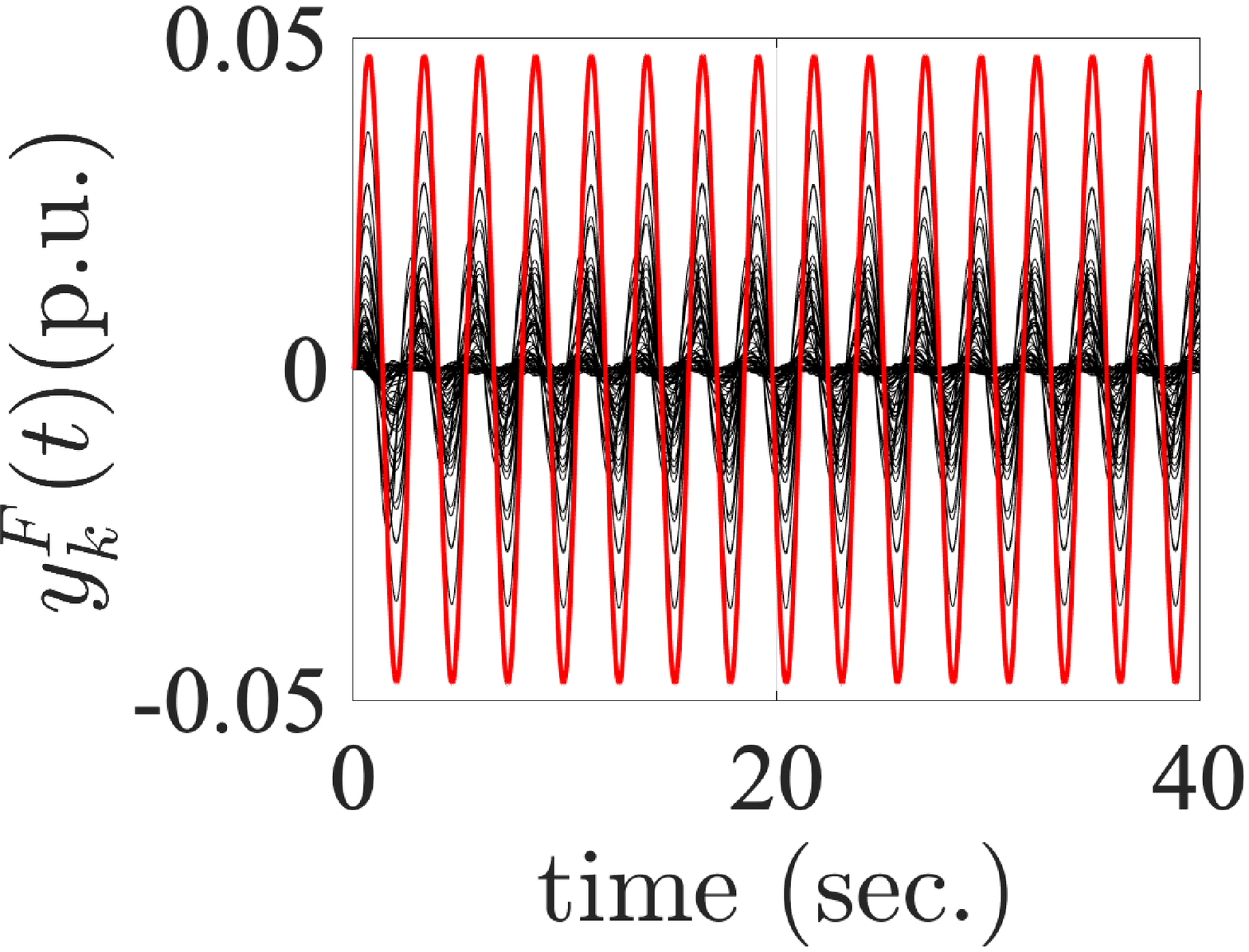}}
				\hfil
				\caption{(a) Visualization of the resonance component of a PMU measurement in the IEEE 68-bus benchmark system based on equation \eqref{eq:resonance_time_domain}: the resonance components of the bus magnitude measurement at Bus 40 (blue curve) and its envelopes (red-dash curves). (b) Resonance-free components of the source measurement (red) and the non-source measurement (black) in the IEEE 68-bus benchmark system.}
				\label{fig:resonance_resonance_free}
			\end{figure}

	\subsubsection{Location information on FO source from the resonance-free component} 
	\label{ssub:descending_trend_of_components_over_space}
	As the resonance components of the set of all PMU measurements mislead system operators with respect to FO localization, we proceed by excluding the resonance component from \eqref{eq: summary_sig_decomp}, and checking whether if the remaining components exhibit any spatial information concerning the FO source. The superposition of the remaining components is termed as \emph{resonance-free}. In specific, for a power system with known physical model \eqref{eq:state_space}, the resonance-free component $y_k^{\text{F}}$ in the $k$-th PMU measurement time series can be obtained by:
	\begin{equation}
		y^{\text{F}}_k(t) = \sum_{i \in \mathcal{N}} y_{k,i}^{\text{D}}(t) + \sum_{i \in \mathcal{M}}y^{\text{B}}_{k,i}(t).
	\end{equation}
	The visualization of the resonance-free component for all PMU measurements in the IEEE 68-bus system is shown in Figure \ref{fig:resonance_resonance_free}(b) under a certain FO setting\footnote{A sinusoidal waveform with amplitude $0.05$ p.u. and frequency $0.38$ Hz is injected into the IEEE 68-bus system via the voltage setpoint of generator 13. The information of the test system is elaborated in Section \ref{sec:case_study}.}. Under the same FO setting, Figure \ref{fig:counter-intuitive} visualizes all PMU measurements $y_k(t)$ in \eqref{eq: summary_sig_decomp}. In Figure \ref{fig:resonance_resonance_free}(b), while the complete measurements $y_k(t)$ are counter-intuitive, the resonance-free components $y_k^{\text{F}}(t)$ convey the location information on the FO source--the resonance-free component of the source measurement exhibits the largest oscillation.


\subsection{Low-rank Nature of Resonance component Matrix} 
\label{sub:low Rankness of Resonance Component}
	The physical interpretation of the efficacy of the RPCA-based algorithm is illustrated by examining the rank of the matrix containing all resonance components for all measurements, which we call the \emph{resonance component matrix} formally defined next. Similar to \eqref{eq:measurement_Y}, the resonance component $y_k^{\text{R}}(t)$ in the $k$-th measurement can be discretized into a row vector $\mathbf{y}_{k,t}^{\text{R}}$:
	\begin{equation} \label{eq:discretized_resonace_vector}
		\mathbf{y}_{k,t}^{\text{R}}:= 
		\begin{bmatrix}
		y_k^{\text{R}}(0),& y_k^{\text{R}}(1/f_{\text{s}}), & \hdots & y_k^{\text{R}}(\left \lfloor{tf_{\text{s}}}\right \rfloor/f_{\text{s}})
	\end{bmatrix}.
	\end{equation}
	Then, the resonance component matrix $Y_t^{\text{R}}$ can be defined as a row concatenation as follows:
	\begin{equation}
	Y_t^{\text{R}}:=
	\begin{bmatrix}
		\left(\mathbf{y}_{1,t}^{\text{R}}\right)^{\top},& \left(\mathbf{y}_{2,t}^{\text{R}}\right)^{\top},& \hdots & \left(\mathbf{y}_{m,t}^{\text{R}}\right)^{\top}
	\end{bmatrix}^{\top}.
	\label{eq:resonance_Y}
\end{equation}

\begin{theorem}
	For the linear time-invariant dynamical system \eqref{eq:state_space}, the rank of the resonance component matrix $Y_t^{\text{R}}$ defined in \eqref{eq:resonance_Y} is at most $2$.
\end{theorem}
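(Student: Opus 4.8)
The plan is to exhibit every row of the resonance component matrix $Y_t^{\text{R}}$ as a linear combination of just two fixed time signals common to all measurements, so that $Y_t^{\text{R}}$ becomes a sum of two rank-one matrices. I would begin from the closed-form resonance component in \eqref{eq:resonance_time_domain}. Writing $\sigma := \sigma_{j^*}$ for the (small) decay rate of the resonant mode identified by \eqref{eq: resonance_condition}, together with $\rho_k := \abs{\mathbf{c}_k \mathbf{r}_{j^*}\mathbf{l}_{j^*}\mathbf{b}_l}$ and $\theta_k := \theta_{k,j^*}$, the resonance component of the $k$-th measurement reads
\begin{equation}
	y_k^{\text{R}}(t) = \frac{P_d \rho_k}{\sigma}\,(1-e^{-\sigma t})\,\sin(\omega_d t + \theta_k).
\end{equation}
The crucial observation is a clean separation of variables: the oscillation frequency $\omega_d$ and the decay envelope $(1-e^{-\sigma t})$ are dictated by the injection and the resonant mode alone, hence are \emph{identical} across all rows $k$, whereas the index $k$ enters only through the scalar gain $\rho_k$ and the phase $\theta_k$.

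Next I would invoke the angle-addition identity $\sin(\omega_d t + \theta_k) = \cos\theta_k \sin(\omega_d t) + \sin\theta_k \cos(\omega_d t)$ to peel the $k$-dependent phase off the time dependence. Defining the two $k$-independent template signals $g_1(t) := (1-e^{-\sigma t})\sin(\omega_d t)$ and $g_2(t) := (1-e^{-\sigma t})\cos(\omega_d t)$, the resonance component factors as
\begin{equation}
	y_k^{\text{R}}(t) = a_k\, g_1(t) + b_k\, g_2(t), \qquad a_k := \tfrac{P_d \rho_k}{\sigma}\cos\theta_k,\ \ b_k := \tfrac{P_d \rho_k}{\sigma}\sin\theta_k.
\end{equation}
Discretizing $g_1$ and $g_2$ on the sampling grid exactly as in \eqref{eq:discretized_resonace_vector} yields two fixed row vectors $\mathbf{g}_1,\mathbf{g}_2$, and the $k$-th row of $Y_t^{\text{R}}$ is precisely $a_k \mathbf{g}_1 + b_k \mathbf{g}_2$. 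Collecting the coefficients into the column vectors $\mathbf{a} = [a_1,\ldots,a_m]^{\top}$ and $\mathbf{b} = [b_1,\ldots,b_m]^{\top}$, I would write $Y_t^{\text{R}} = \mathbf{a}\,\mathbf{g}_1 + \mathbf{b}\,\mathbf{g}_2$, a sum of two outer products, whence $\rank Y_t^{\text{R}} \le 2$.

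The step I expect to be the main obstacle is justifying that the decay envelope $(1-e^{-\sigma t})$ can genuinely be treated as common to all rows. This is immediate when the resonance is driven by a single near-resonant mode $j^*$, so that the sum $\sum_{i\notin \mathcal{M}\cup\mathcal{N}} y^{\text{R}}_{k,i}$ collapses to one term. If several modes are near-resonant, each contributes its own envelope $(1-e^{-\sigma_i t})$, and these cannot in general be pulled out of the sum over $i$ before the angle-addition step; the rank-two conclusion then requires the additional structural fact that all such modes share the common forced frequency $\omega_d$ \emph{and} an (approximately) common decay rate, so that a single envelope factors out of the $k$-summation. I would therefore state this mode-isolation assumption explicitly, restricting attention to the dominant resonant mode, for which the separation of variables — and hence the rank bound of $2$ — holds exactly.
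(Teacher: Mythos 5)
Your proof is correct and takes essentially the same route as the paper: the identical angle-addition separation of variables, the same two common template signals $(1-e^{-\sigma t})\sin(\omega_d t)$ and $(1-e^{-\sigma t})\cos(\omega_d t)$, and the same factorization of $Y_t^{\text{R}}$ into a product of an $m\times 2$ coefficient matrix with a $2\times\lfloor t f_{\text{s}}\rfloor$ signal matrix (equivalently, a sum of two outer products). The mode-isolation caveat you flag at the end is a genuine sharpening rather than a divergence --- the paper's proof silently works with a single resonant index $i$ even though \eqref{eq: summary_sig_decomp} sums over all $i\notin\mathcal{M}\cup\mathcal{N}$, so making that single-dominant-mode assumption explicit is a small improvement in rigor.
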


\begin{proof}
Based on \eqref{eq:resonance_time_domain}, define $E_k := {P_d \abs{\mathbf{c}_k \mathbf{r}_i \mathbf{l}_i \mathbf{b}_l}}/{\sigma_i}$.
Then
	\begin{equation*}
	\begin{aligned}
		y_k^{\text{R}}(t) = &(1- e^{-\sigma_i t})\sin(\omega_d t)E_k \cos(\theta_{k,i})+ \\& (1- e^{-\sigma_i t})\cos(\omega_d t)E_k \sin(\theta_{k,i}).
	\end{aligned}
\end{equation*}
We further define functions $f_1(t)$, $f_2(t)$ and variables $g_1(k)$, $g_2(k)$ as follows: $f_1(t):=(1- e^{-\sigma_i t})\sin(\omega_d t)$; $f_2(t):=(1- e^{-\sigma_i t})\cos(\omega_d t)$; $g_1(k) :=E_k \cos(\theta_{k,i})$; and $g_2(k) := E_k \sin(\theta_{k,i})$.
Then, $y_k^{\text{R}}(t)$ can be represented by $y_k^{\text{R}}(t) = f_1(t)g_1(k) + f_2(t)g_2(k)$.

The resonance component matrix $Y_t^{\text{R}}$ up to time $t$ can be factorized as follows:

\begin{equation} \label{eq: factorization}
		Y_t^{\text{R}} = \begin{bmatrix}
			g_1(1) & g_2(1)\\
			g_1(2) & g_2(2)\\
			\vdots & \vdots\\
			g_1(m) & g_2(m)
		\end{bmatrix}
		\begin{bmatrix}
			f_1(0)&f_1(\frac{1}{f_{\text{s}}}) & \ldots & f_1(\frac{\left \lfloor{tf_{\text{s}}}\right \rfloor}{f_{\text{s}}})\\
			f_2(0)&f_2(\frac{1}{f_{\text{s}}}) & \ldots & f_2(\frac{\left \lfloor{tf_{\text{s}}}\right \rfloor}{f_{\text{s}}})\\
		\end{bmatrix}.\\
\end{equation}

Denote by vectors $\mathbf{g}_1$ and $\mathbf{g}_2$ the first and second columns of the first matrix in the right hand side (RHS) of \eqref{eq: factorization}, respectively; and by vectors $\mathbf{f}_1$ and $\mathbf{f}_2$ the first and second rows of the second matrix in the RHS of \eqref{eq: factorization}. Then \eqref{eq: factorization} turns to be
\begin{equation} \label{eq: compact_form_factorization}
		Y_t^{\text{R}}=\begin{bmatrix}
			\mathbf{g}_1 & \mathbf{g}_2
		\end{bmatrix}
		\begin{bmatrix}
			\mathbf{f}_1 \\ \mathbf{f}_2
		\end{bmatrix}.
\end{equation}
Given \eqref{eq: compact_form_factorization}, it becomes clear that the rank of the resonance component matrix $Y_t^{\text{R}}$ is at most $2$.
\end{proof}

Typically, for a resonance component matrix $Y_t^{\text{R}}$ with $m$ rows and $\left \lfloor{tf_{\text{s}}}\right \rfloor$ columns, owing to $\min(m, \left \lfloor{tf_{\text{s}}}\right \rfloor)\gg2$, the resonance component matrix $Y_t^{\text{R}}$ is a low-rank matrix, which is assumed to be integrated by the low-rank component $L_t$ in equation \eqref{eq: formulation_1}. As discussed in Section \ref{ssub:descending_trend_of_components_over_space}, the source measurement can be tracked by finding the maximal absolute entry of the resonance-free matrix $(Y_t-Y_t^{\text{R}})$. 
According to \eqref{eq:i_star}, the PMU measurement containing the largest absolute entry in the sparse component $S_t$ is considered as the source measurement.
Then, it is reasonable to conjecture that the sparse component $S_t$ in \eqref{eq: formulation_1} captures the part of the resonance-free matrix that preserves the location information of FO source. Therefore, a theoretical connection between the proposed data-driven method in Algorithm \ref{alg:FO_Localization} and the physical model of power systems described in equation \eqref{eq:state_space} can be established. Although forced oscillation phenomena have been extensively studied in physics \cite{wiener1965cybernetics}, the low-rank property, to the best of our knowledge, is first investigated in this paper.

Note that Theorem $1$ offers one possible interpretation of the effectiveness of the proposed algorithm. As this paper focuses on the development of one possible data-driven localization algorithm, future work shall investigate broader category of possible algorithm and their theoretical underpinning.


\section{Case Study} 
\label{sec:case_study}
In this section, we validate the effectiveness of Algorithm \ref{alg:FO_Localization} using data from IEEE 68-bus benchmark system and WECC 179-bus system. We first describe the key information on the test systems, the procedure for obtaining test data, the parameter settings of the proposed algorithm, and the algorithm performance over the obtained test data. Then the impact of different factors on the performance of the localization algorithm is investigated. Finally, we compare the proposed algorithm with the energy-based method reported in \cite{maslennikov2017dissipating}.
As will be seen, the proposed method can pinpoint the FO sources with high accuracy \emph{without any information on system models and grid topology}, even when resonance exists.
\subsection{Performance Evaluation of the Localization Algorithms in Benchmark Systems} 
\label{sub:System_Description}
\subsubsection{IEEE 68-bus Power System Test Case} 
\label{ssub:ieee_68_bus_power_system}
The system parameters of the IEEE 68-bus power system are reported in the Power System Toolbox (PST) \cite{207380} and its topology is shown in Figure \ref{fig:IEEE68-bus}. Let $\mathcal{V}=\{1,2,\ldots,16\}$ consist of the indexes of all $16$ generators in the 68-bus system. Based on the original parameters, the following modifications are made: 1) the power system stabilizers (PSS) at all generators, except the one at Generator $9$, are removed, in order to create more poorly-damped oscillatory modes; 2) for the PSS at Generator $9$, the product of PSS gain and washout time constant is changed to $250$. Based on the modified system, the linearized model of the power system \eqref{eq:state_space} can be obtained using the command ``\texttt{svm\char`_mgen}'' in PST. There are $25$ oscillatory modes whose frequencies range from $0.1$ Hz to $2$ Hz. Denote by $\mathcal{W}=\{\omega_1, \omega_2,\ldots,\omega_{25}\}$ the set consisting all $25$ modal frequencies of interest. The periodical perturbation $u_l$ in \eqref{eq:input_sig} is introduced through the voltage setpoints of generators. The analytical expression of $u_l$ is $0.05 \sin (\omega_dt)$, where $\omega_d \in \mathcal{W}$.
\begin{figure}[h]
	\centering
 	\includegraphics[width = 3in]{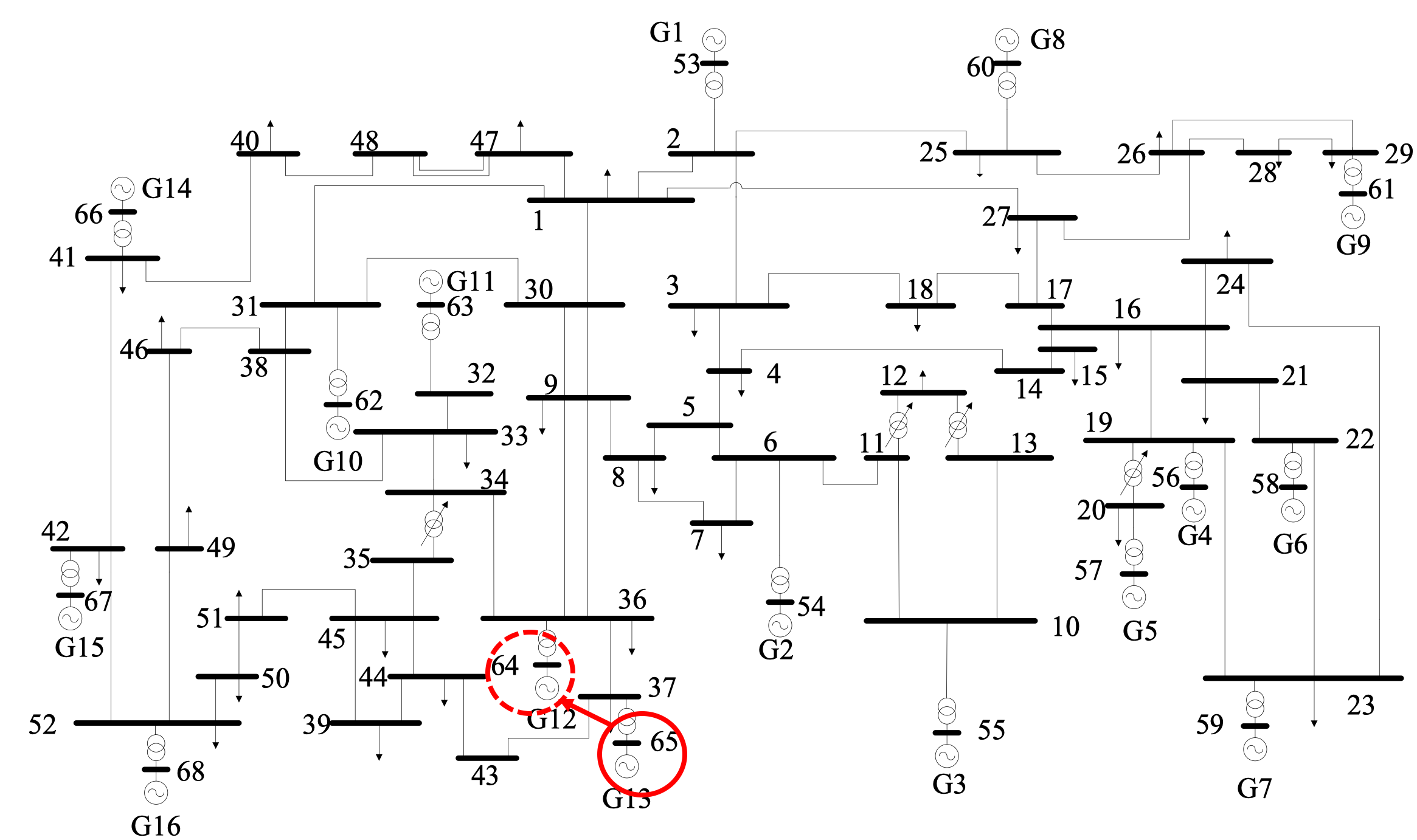}
	\caption{The IEEE 68-bus power system \cite{8362302}: the generator in the solid circle is the actual source generator; the generator in the dash circle is the identified source.}
	\label{fig:IEEE68-bus}
\end{figure}

We create forced oscillations in the 68-bus system according to set $\mathcal{V} \times \mathcal{W}$: for element $(i, \omega_j)\in \mathcal{V} \times \mathcal{W}$, the periodical perturbation $u_l(t)$ with frequency $\omega_j$ is injected into the grid through the voltage setpoint of generator $i$ at time $t=0$. Then, the system response is obtained by conducting a $40$-second simulation. The bus voltage magnitude deviations constitute the output/measurement vector $\mathbf{y}(t)$ in \eqref{eq:state_space}. Finally, the measurement matrix is constructed based on \eqref{eq:measurement_Y}, where the sampling rate $f_{\text{s}}$ is $60$ Hz. By repeating the above procedure for each element in set $\mathcal{V}\times \mathcal{W}$, we obtain $400$ measurement matrices ($|\mathcal{V}\times \mathcal{W}|$). For the $400$ measurement matrices, $44$ measurement matrices satisfy the resonance criteria \eqref{eq:counter_intuitive_1}, \eqref{eq:counter_intuitive_2} with $N_0 = 0$ and they are marked as the counter-intuitive cases which are used for testing the performance of the proposed method.

The tunable parameters $T_0$ and $\xi$ in Algorithm \ref{alg:FO_Localization} are set to $10$ and $0.0408$, respectively. The detailed information on setting $\xi$ can be found in \cite{lin2010augmented}. Measurements of voltage magnitude, phase angle and frequency are used for constituting the measurement matrix. Then, we apply Algorithm \ref{alg:FO_Localization} for the $44$ counter-intuitive cases. Algorithm \ref{alg:FO_Localization} can pinpoint the source measurements in $43$ counter-intuitive cases and, therefore, achieves $97.73\%$ accuracy without any knowledge of system models and grid topology.

Next, we scrutinize the geographic proximity between the identified and actual source measurements in the single failed case. The algorithm outputs that the source measurement is located at Bus $64$ (highlighted with a solid circle in Figure \ref{fig:IEEE68-bus}), when a periodic perturbation with frequency $1.3423$ Hz is injected into the system through the generator directly connecting to Bus $65$ (highlighted with a dash circle in Figure \ref{fig:IEEE68-bus}). As it can be seen in Figure \ref{fig:IEEE68-bus}, the identified and actual source measurements are geographically close. Therefore, even the failed cases from the proposed method can effectively narrow the search space.


\subsubsection{WECC 179-bus System Test Case} 
This subsection leverages the open-source forced oscillation dataset \cite{7741772} to validate the performance of the RPCA-based method. The offered dataset is generated via the WECC 179-bus power system \cite{7741772} whose topology is shown in Figure \ref{fig:WECC179}(a). The procedure of synthesizing the data is reported in \cite{7741772}. The available dataset includes $15$ forced oscillation cases with single oscillation source, which are used to test the proposed method. In each forced oscillation case, the measurements of voltage magnitude, voltage angle and frequency at all generation buses are used to construct the measurement matrix $Y_t$ in \eqref{eq:measurement_Y}, from the $10$-second oscillatory data, i.e., $T_0 = 10$. Then, the $15$ measurement matrices are taken as the input for Algorithm \ref{alg:FO_Localization}, where the tunable parameter $\xi$ is set to $0.0577$ using the same reasoning as in the 68-bus system case.

For the WECC 179-bus system, the proposed method achieved $93.33\%$ accuracy. Next, we present how geographically close the identified FO sources are to the ground truth in the seemingly incorrect case. In Case \texttt{FM-6-2}, a periodic rectangular perturbation is injected into the grid through the governor of the generator at Bus $79$ which is highlighted with a red solid circles in Figure \ref{fig:WECC179}(b). The source measurement identified by the proposed method is at Bus $35$ which is highlighted by a red dash circle. As can be seen in Figure \ref{fig:WECC179}(b), the identified FO source is geographically close to the actual source. Again, even the seemingly wrong result can help system operators substantially narrow down the search space for FO sources.

\begin{figure}[h] 
				\centering
				\subfloat[]{\includegraphics[width=1.4in, height=1.8in]{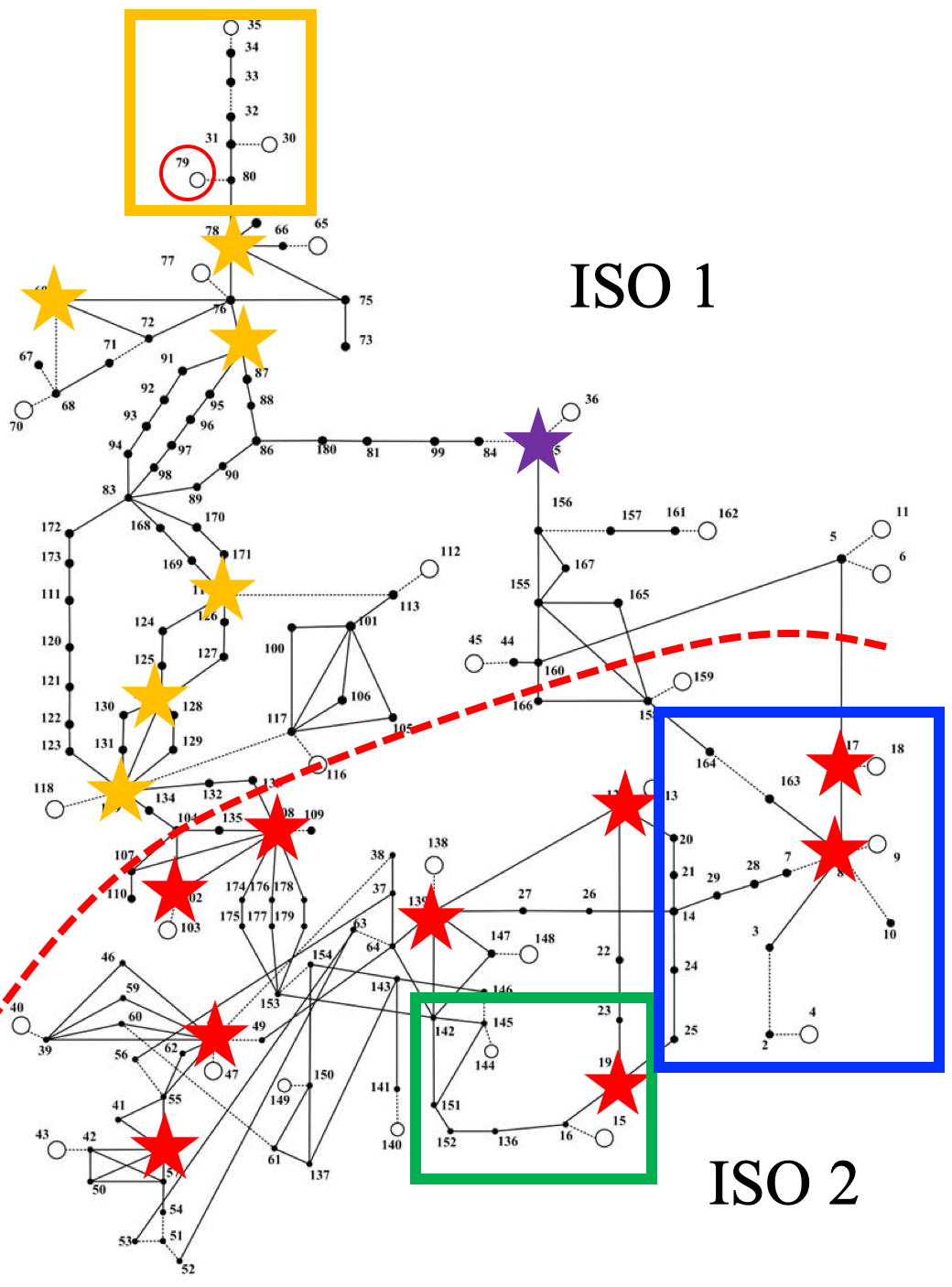}}
				\hfil
				\subfloat[]{\includegraphics[width=1in]{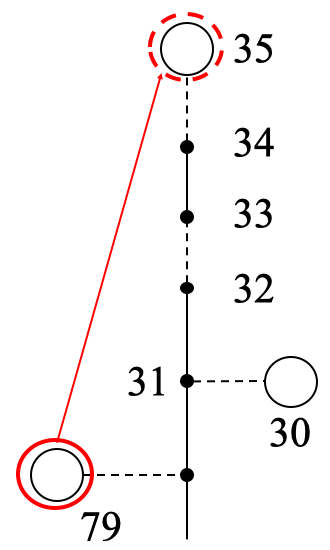}}
				\hfil
				\caption{WECC 179-bus power system \cite{7741772}: (a) complete topology; (b) zoomed-in version of the area in the yellow box in the left figure.}
				\label{fig:WECC179}
			\end{figure}
\subsection{Algorithm Robustness} 
\label{sub:algorithm_robustness}
The subsection focuses on testing the robustness of the proposed algorithm under different factors which include measurement types, noise, and partial coverage of PMUs. The impact of each factor on the algorithm performance will be demonstrated as follows.
\subsubsection{Impact of Measurement Types on Algorithm Performance} 
\label{ssub:impact_of_measurement_types_on_algorithm_performance}
Under all possible combinations of nodal measurements (voltage magnitude $|V|$, voltage angle $\angle V$ and frequency $f$), the localization accuracies of the proposed algorithm in the two benchmark systems are reported in Table \ref{tab:dif_type_meas}. As can be observed in Table \ref{tab:dif_type_meas}, the maximal accuracy is achieved when voltage magnitudes, voltage angles and frequencies are used to constitute the measurement matrix in \eqref{eq:measurement_Y}.
\begin{table}[tb]
	\caption{Impact of Measurement Types on Localization Performance}
	\label{tab:dif_type_meas}
	\centering

	\begin{tabular}{l|c|c|c|c}
	\hline

	\hline
	\textbf{Types}& $|V|$ & $\angle{V}$ & $|V|, \angle{V}$ & $f$\\
	\hline
	$68$-bus System&$84.09 \%$& $50.00\%$& $84.09$\%& $52.27\%$\\
	\hline
	$179$-bus System&$86.67\%$& $33.33\%$& $73.33\%$ & $20.00\%$\\
	\hline
	\textbf{Types}& $|V|, f$& $\angle{V},f$ & $|V|, \angle{V}, f$& N/A\\
	\hline
	$68$-bus System& $93.18\%$ & $59.09\%$ &$\mathbf{97.73\%}$& N/A\\
	\hline
	$179$-bus System&$80.00\%$ & $46.67\%$ &$\mathbf{93.33}\%$& N/A\\

	\hline

	\hline
	\end{tabular}
\end{table}
\subsubsection{Impact of Noise on Algorithm Performance} 
\label{ssub:impact_of_different_noise_levels_on_algorithm_performance}
Table \ref{tab:noise} records the localization accuracy under different levels of noise. We can conclude the proposed algorithm performs well under the cases with signal-to-noise ratio (SNR) less than $30$ dB which is lower than the SNR used for PMU-related tests \cite{7892872}.
\begin{table}[tb]
	\caption{Impact of Noise Levels on Localization Performance}
	\label{tab:noise}
	\centering

	\begin{tabular}{l|c|c|c|c|c}
	\hline

	\hline
	SNR&	$90$dB&	$70$dB&	$50$dB&	$30$dB&	$10$dB\\
	\hline
$68$-Bus&	$97.73$\%&	$97.73$\%&	$97.73$\%&	$97.73$\%&	$56.82$\%\\
\hline
$179$-Bus&	$93.33$\%&	$93.33$\%&	$93.33$\%&	$93.33$\%& $73.33$\%\\

	\hline

	\hline
	\end{tabular}
\end{table}
\subsubsection{Impact of Partial Coverage of Synchrophasors on Algorithm Performance} 
\label{ssub:impact_of_partial_coverage_of_synchrophasors_on_algorithm_performance}
In practice, not all buses are equipped with PMUs. Besides, available PMUs may be installed in buses near oscillation sources, instead of buses to which oscillation sources are directly connected. A test case is designed for testing the performance of the proposed algorithm in the scenario as described above. In this test case, the locations of all available PMUs are marked with stars (regardless colors of the stars) in Figure \ref{fig:WECC179}(a). The test result is listed in Table \ref{tab:partial_observation}. As illustrated in Table \ref{tab:partial_observation},  the proposed method can effectively identify the available PMUs that are close to oscillation sources, even though no PMU is installed in generation buses.
\begin{table*}[tb]
	\caption{Impact of Partial Coverage of Synchrophasor on Algorithm Performance}
	\label{tab:partial_observation}
	\centering

	\begin{tabular}{l|c|c|c|c|c|c|c|c|c|c|c|c|c|c|c}
	\hline

	\hline
	\textbf{Case Name} & F-1 & FM-1 & F-2& F-3&FM-3& F-4-1&F-4-2&F-4-3&F-5-1&F-5-2&F-5-3&F-6-1&F-6-2&F-6-3&FM-6-2 \\
	\hline
	\textbf{Identified Source}& 8&8&78&69&69&69&78&78&78& 78&78&78&78&78&78\\
	\hline
	\textbf{Nearest PMU}&8&8&78/69&78/69&78/69&78/69&78/69&78/69&78/69&78/69&78/69&78/69&78/69&78/69&78/69\\

	\hline

	\hline
	\end{tabular}
\end{table*}

Independent System Operators (ISOs) may also need to know whether FO sources are within their control areas. However, ISOs might not be able to access PMUs near FO sources, limiting the usefulness of the proposed algorithm. For example, assume that there are two ISOs, i.e., ISO $1$ and ISO $2$, in Figure \ref{fig:WECC179}(a), where the red dash line is the boundary between the control areas of the two ISOs. It is possible that FO sources are at the ISO $1$ control area, whereas ISO $2$ only can access the PMUs at the buses marked with red stars. In order to apply the RPCA-based method, ISO $2$ need to access one PMU in the area controlled by ISO $1$, say, the PMU marked with a purple star in Figure \ref{fig:WECC179}(a). In the \texttt{F-2} dataset, the FO source locates at Bus $79$ which is marked with a red circle in Figure \ref{fig:WECC179}(a). With the data collected from PMUs marked with red and purple stars, the proposed algorithm outputs the bus marked with a purple star, indicating that the FO source is outside the control area of ISO $2$.

\subsection{Comparison with Energy-based Localization Method} 
\label{sub:comparison_with_energy_based_localization_method}
This subsection aims to compare the proposed localization approach with the Dissipating Energy Flow (DEF) approach \cite{maslennikov2017dissipating}. We use the \texttt{FM-1} dataset (Bus $4$ is the source measurement) \cite{7741772} for the purpose of comparing DEF method with the proposed algorithm. PMUs are assumed to be installed at all generator buses except ones at Buses $4$ and $15$. Besides, Buses $7$, $15$ and $19$ are also assumed to have PMUs. \emph{Without any information on grid topology}, the RPCA-based method suggests the source measurement is at Bus $7$ which is in the vicinity of the actual source. However, topology errors may cause DEF-based method to incur both false negative and false positive errors, as will be shown in the following two scenarios.

\subsubsection{Scenarios 1} 
\label{ssub:false_negative_errors_resulting_from_topology_errors}
The zoomed-in version of the area within the blue box in Figure \ref{fig:WECC179}(a) is shown in Figure \ref{fig:Type1}, where the left and right figures are actual system topology and the topology reported in a control center, respectively. All available PMUs are marked with yellow stars in Figure \ref{fig:Type1}. Based on these available PMUs, the relative magnitudes and directions of dissipating energy flows are computed according to the \texttt{FM-1} dataset and the method reported in \cite{maslennikov2017dissipating}. With the true topology, the FO source cannot be determined, as the energy flow direction along Branch $8$-$3$ cannot be inferred based on the available PMUs. However, with the topology error shown in Figure \ref{fig:Type1}(b), i.e., it is mistakenly reported that Bus 29 (Bus 17) is connected with Bus 3 (Bus 9), it can be inferred that the an energy flow with relative magnitude of $0.4874$ is injected into the Bus $4$, indicating that Bus $4$ is \emph{not} the source measurement. Such a conclusion contradicts with the ground truth. Therefore, with such a topology error, the dissipating energy flow method leads to a false negative error.
			\begin{figure}[h] 
				\centering
				\subfloat[]{\includegraphics[width=1.3in]{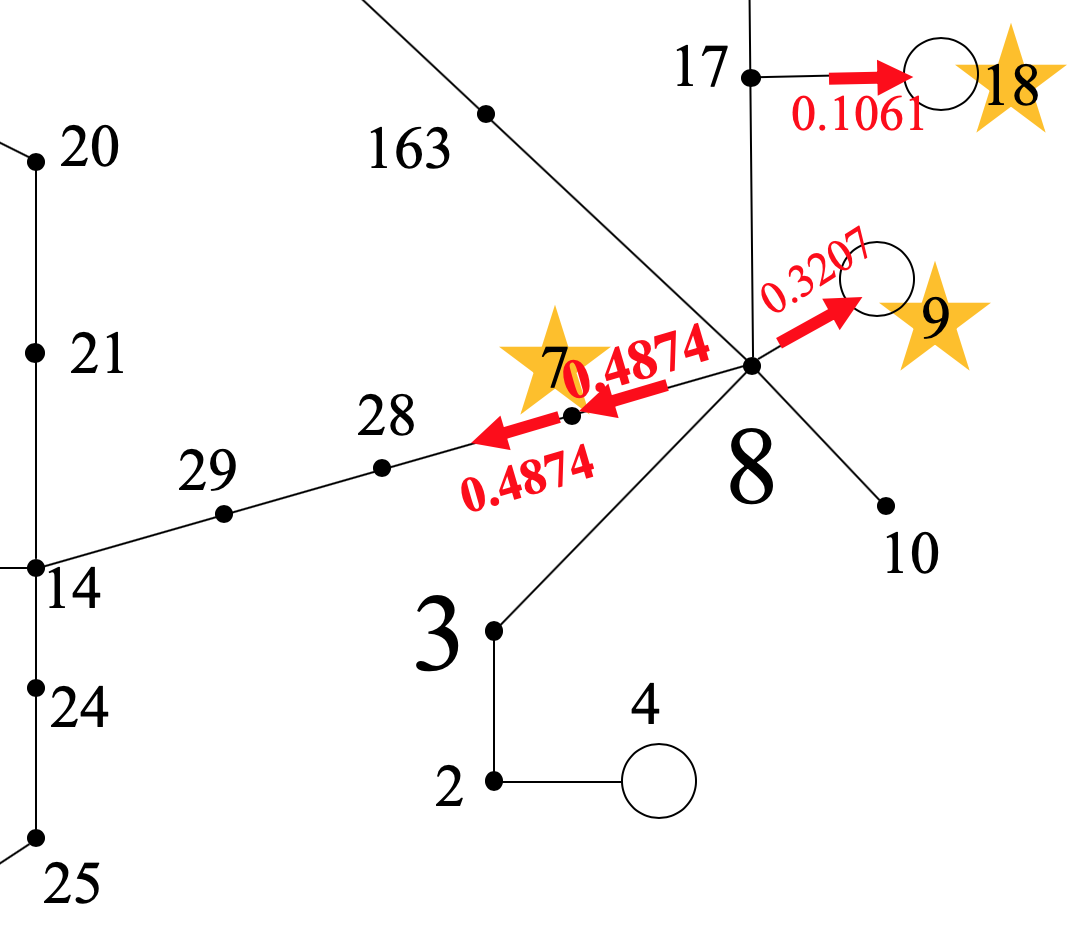}}
				\hfil
				\subfloat[]{\includegraphics[width=1.3in]{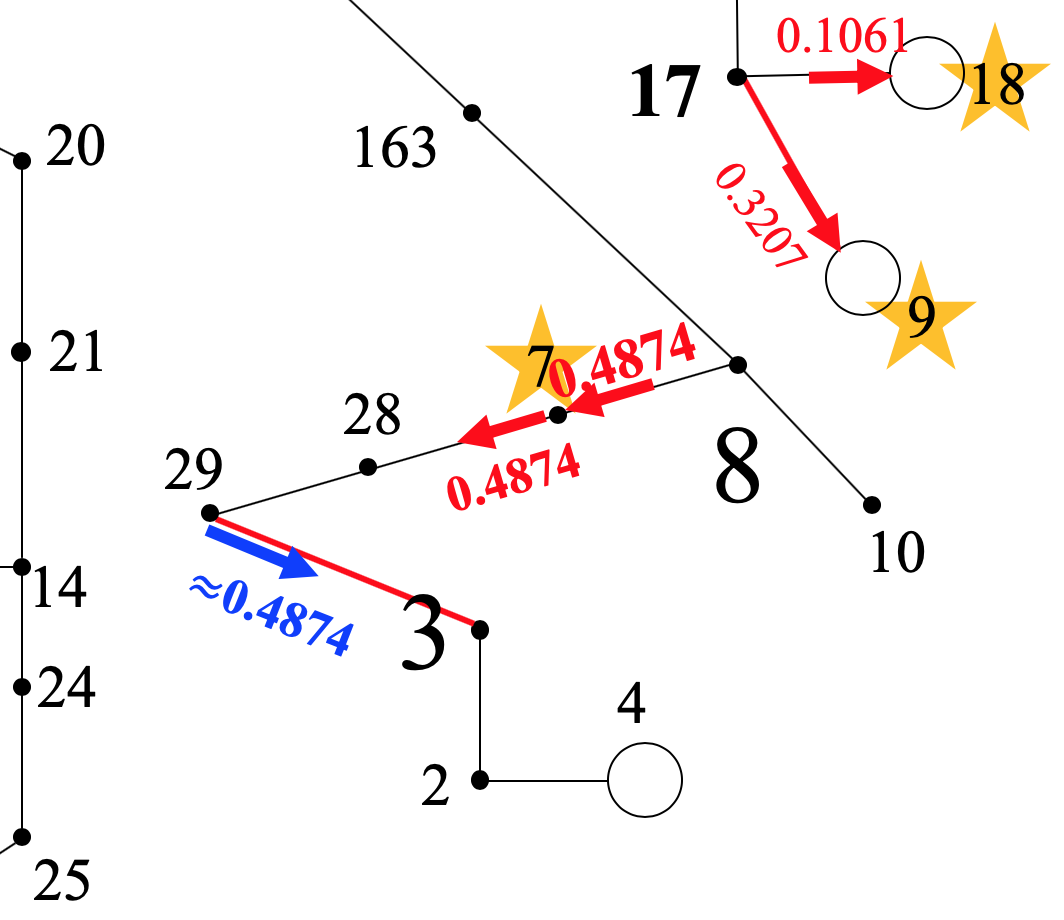}}
				\hfil
				\caption{Zoomed-in version of the area in the blue box at Figure \ref{fig:WECC179} (a): actual topology (left); topology reported in a control center (right). Relative magnitudes and direction of energy flows are labeled with red numbers and arrows, respectively.}
				\label{fig:Type1}
			\end{figure}

\subsubsection{Scenario 2} 
\label{ssub:false_positive_error_resulting_from_topology_errors}
Similar to Scenario 1, topology errors exist within the area highlighted by a green box in Figure \ref{fig:WECC179}(a), whose zoomed-in version is shown in Figure \ref{fig:Type2}. As shown in Figure \ref{fig:Type2}(a), it can be inferred that an energy flow with relative magnitude of $0.171$ injects into Bus $15$ with the information of actual topology and available PMUs, indicating Bus $15$ is not a source. However, with the reported system topology, the generator at Bus $15$ injects to the rest of grid an energy flow with magnitude of $0.0576$, suggesting the source measurement is at Bus $15$. Again, such a conclusion contradicts with the ground truth and, hence, incurs a false positive error.

\begin{figure}[h] 
				\centering
				\subfloat[]{\includegraphics[width=1.2in]{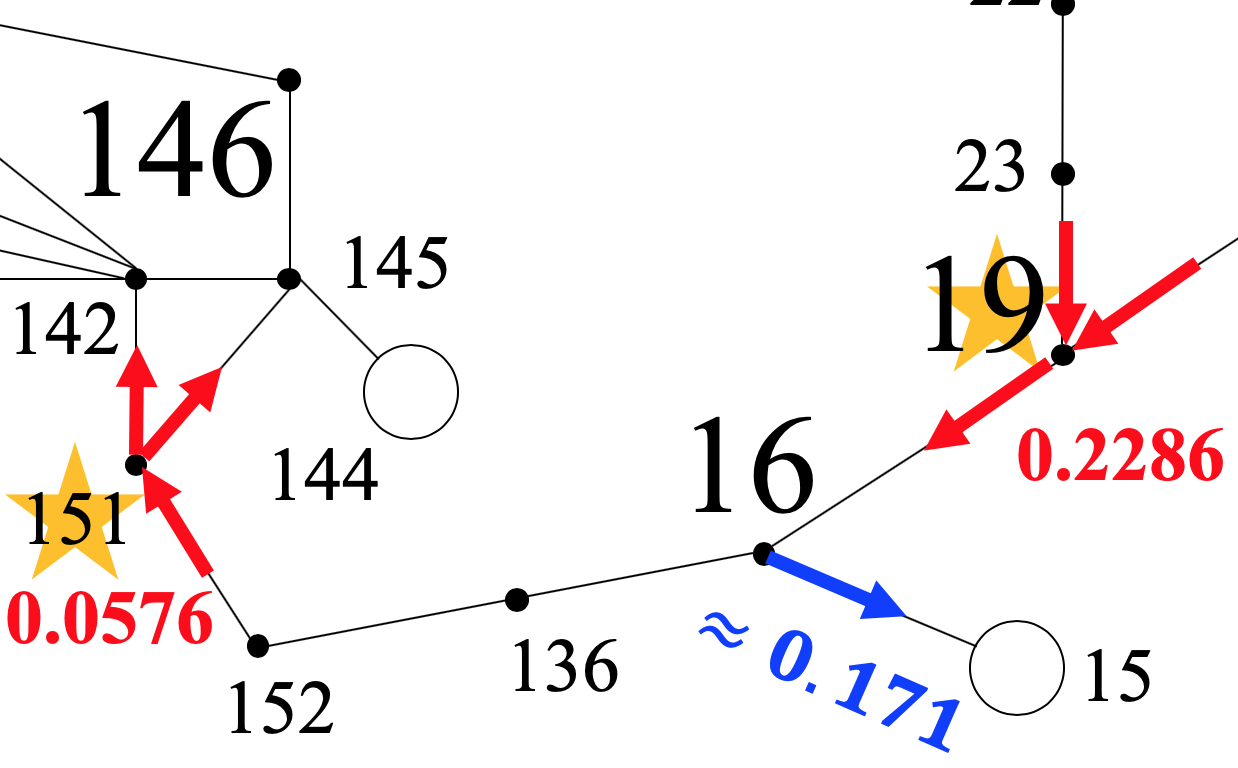}}
				\hfil
				\subfloat[]{\includegraphics[width=1.2in]{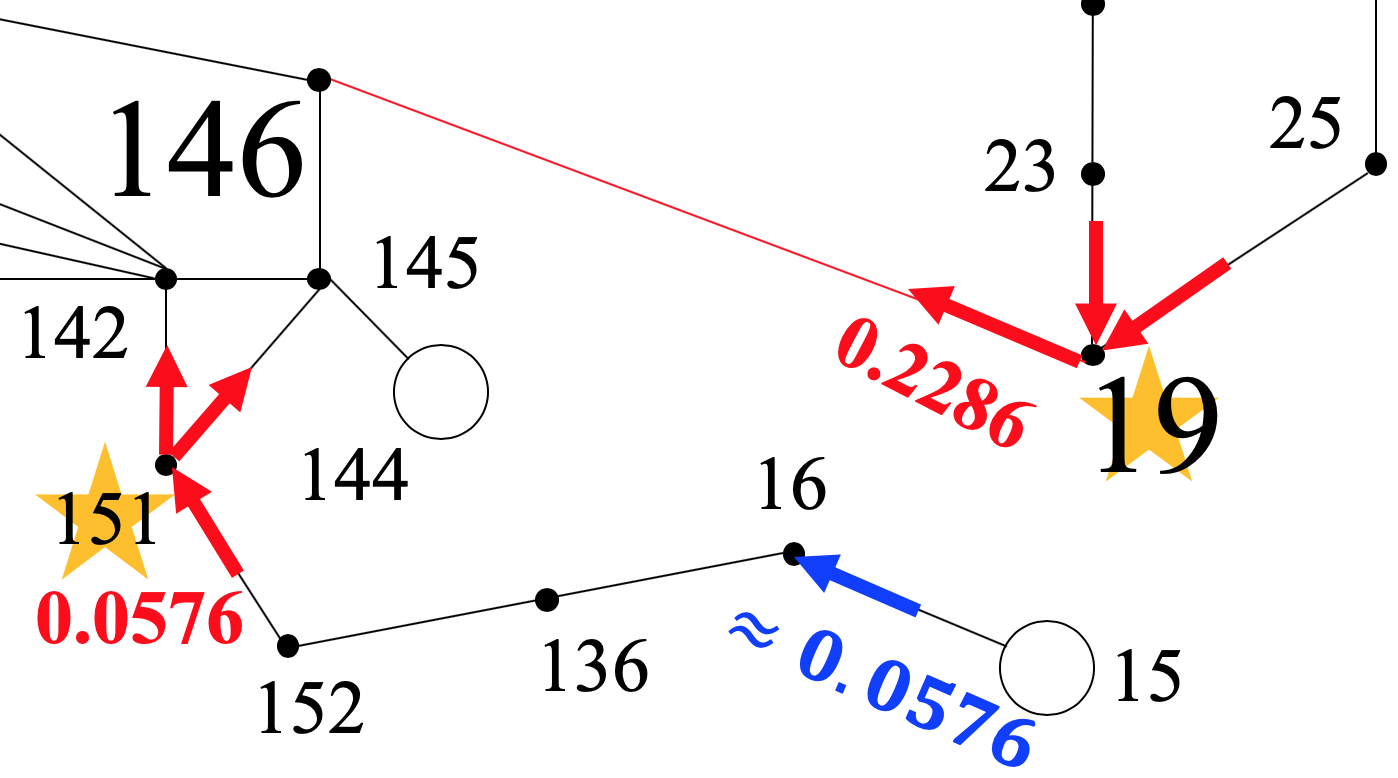}}
				\hfil
				\caption{Zoomed-in version of the area in the green box at Figure \ref{fig:WECC179} (a): actual topology (left); topology reported in a control center (right).}
				\label{fig:Type2}
			\end{figure}


\section{Conclusions} 
\label{sec:conclusion}
In this paper, a purely data-driven but physically interpretable method is proposed in order to locate forced oscillation sources in power systems. The localization problem is formulated as an instance of matrix decomposition, i.e., how to decompose the high-dimensional synchrophasor data into a low-rank matrix and a sparse matrix, which can be done using Robust Principal Component Analysis. Based on this problem formulation, a localization algorithm for real-time operation is presented. The proposed algorithm does not require any information on system models nor grid topology, thus enabling an efficient and easily deployable solution for real-time operation. In addition, a possible theoretical interpretation of the efficacy of the algorithm is provided based on physical model-based analysis, highlighting the fact that the rank of the resonance component matrix is at most 2. Future work will explore a broader set of algorithms and their theoretical performance analysis for large-scale realistic power systems.
\bibliographystyle{IEEEtran}
\bibliography{J3ref}

		




\ifCLASSOPTIONcaptionsoff
  \newpage
\fi






\end{document}